\newcommand{\IN}{\mathbb N}
\newcommand{\one}{1}
\newcommand{\zero}{0}
\newtheorem{theorem}{Theorem}
\newtheorem{corollary}{Corollary}
\newtheorem{lemma}{Lemma}
\theoremstyle{remark}
\newtheorem{remark}{Remark}
\title[Toehold Purchase Problem]{Toehold Purchase Problem:\\ A comparative analysis of two strategies}
\author{I.~Banakh, T.~Banakh, M.~Vovk, \boxed{\rm P.~Trisch}}
\keywords{Toehold, tender offer, mixed strategy, takeover, beta function}
\begin{document}

\begin{abstract}
Toehold purchase, defined here as purchase of one share in a firm by an investor preparing a tender offer to acquire majority of shares in it, reduces by one the number of shares this investor needs for majority. In the paper we construct mathematical models for the toehold and no-toehold strategies and compare the expected profits of the investor and the probabilities of takeover the firm in both strategies. It turns out that the expected profits of the investor in both strategies coincide. On the other hand, the probability of takeover the firm using the toehold strategy is considerably higher comparing to the no-toehold strategy. In the analysis of the models we apply the apparatus of incomplete Beta functions and some refined bounds for central binomial coefficients.
\end{abstract}
\maketitle

\section{Introduction}
\parskip1pt

This paper is about the toehold purchase problem. By a toehold we mean either the number or the fraction of shares owned by an outside investor considering or preparing a tender offer to acquire majority of shares and take over. By a tender offer we mean a proposal made by an investor to shareholders to tender their shares, with the hope to obtain majority of shares and take over. At the time of such an offer, an investor may already own, say, one-share-toehold. In our model the firm is going to be widely held and each shareholder will own one share. Outside investor will make a tender offer to all shareholders if s/he does not own a toehold and to all shareholders excluding self when s/he does own a toehold. For our purposes we consider the terms `tender offer' and `bid' as synonyms. Sometimes there is an upper bound on the number of possible stake (shareholding) that the outside investor may hold at the time s/he places a tender offer.
Here surfaces one of the questions of toehold literature. If an investor is allowed to hold only a certain fraction of shares when s/he wishes to place a tender offer (but not more), would s/he always want to hold this maximum possible stake? If not, why not? Probably with this question in mind, a number of toehold theories look at optimal toeholds in a variety of settings and under variety of assumptions about market structure, ownership structure (how many shares each shareholder owns), information structure or the number of investors (one, two or more); see Grossman, Hart \cite{GH80}, Bagnoli, Lipman \cite{BL88}, \cite{BL96}, Singh \cite{S98}, Ravid, Spiegel \cite{RS99}, Betton, Eckbo \cite{BE00}, Bris \cite{B02}, Goldman, Qian \cite{GQ05}, Ettinger \cite{E09}, Betton, Eckbo, Thorburn \cite{BET09}, Chatterjee, John, Yan \cite{CJY12}. 

Our approach is different in that we specifically assume that there is only one investor who is considering a tender offer and that if this investor does decide to purchase a toehold then s/he purchases only one share. If there is no toehold, then our assumptions follow the lines of Bagnoli and Lipman \cite{BL88}. If investor purchases a toehold, then the circumstances of the tender offer are different. The difference does not only lie in the fact that the offer is made to one fewer shareholders. In this case investor's tender offer might (and generally would) take into account the effect of potential takeover on the worth of a toehold. Our setting is rudimentary in that there are no asymmetries of information, toehold is one share and key to toehold purchase is either yes or no answer. The two strategies (no-toehold and toehold) of the outside investor are described in Section 2.  The main results of Section 2 are Theorems~\ref{t0} and \ref{t1}. In Theorem 1 we calculate the principal parameters of the non-toehold strategy: the price of a share $X_0$ suggested by the investor in the tender offer, the probability $\sigma_0$ that a shareholder will sell her/his share to the investor, the probability $P_0$ of takeover the firm, and the expected profit $\Pi_0$ of the investor. In Theorem 2 we calculate the respective parameters $X_1$, $\sigma_1$, $P_1$, $\Pi_1$ for the toehold strategy. Comparing the obtained formulas for these parameters we discovered that both strategies yield the same expected profit $\Pi_1=\Pi_0$ and the same probability $\sigma_0=\sigma_1$ that a shareholder will sell her/his share to the investor. On the other hand, the probability $P_1$ of takeover the firm using the toehold strategy is higher than the corresponding probability $P_0$ for the no-toehold strategy. This follows from Theorem~\ref{bounds} that yields some lower and upper bounds on the parameters $X_i$, $P_i$, $\Pi_i$, $i\in\{0,1\}$, of our models. The proof of Theorem~\ref{bounds} (presented in Appendix) is not trivial and uses the mathematical apparatus of incomplete beta functions and some non-trivial bounds on the central binomial coefficients. In Section 4 we make some mathematical conclusions that follow from the analysis of our models.

\section{Models}\label{s:models}

We assume that a firm has $2n+1$ shareholders. Each shareholder owns one share. The
worth of each share, if the firm continues to be run by incumbent management, is normalized to $0$.
There is also an outside investor $\mathbf B$ who is considering takeover bid. If investor takes over, the value of each share is increased to $1$.

Now we consider two strategies of the investor $\mathbf B$ who is willing to take over the firm buying a majority of shares.
\smallskip

${\mathbf 0}.$ The first strategy will be referred to as the {\em no-toehold strategy} and its parameters will be labeled by the subscript $\zero$. Following the no-toehold strategy, the investor $\mathbf B$ makes a tender offer to all $2n+1$ shareholders suggesting a price $X$ for each share. Shareholders decide
independently whether to accept or to reject the tender offer. They may use mixed strategies, i.e. accept
the offer with certain probability $\sigma$. Simple majority of $n+1$ shares is necessary for takeover. Tender offer is unconditional in the sense that if less than $n$ shareholder accept the tender offer, then $\mathbf B$ has to purchase shares from those shareholders who accepted the offer, even though in that case  $\mathbf B$ becomes a minority shareholder, the worth of each share value remains at $0$ and such purchase is ex post unprofitable for  $\mathbf B$  as long as $X>0$.

Suppose shareholders use symmetric mixed strategies, in which in response to tender offer $X$ all of
them accept the tender offer with probability $\sigma\in(0,1)$ and reject it with probability $(1-\sigma)$. For the
pair $(X,\sigma)$ to be equilibrial, each shareholder has to be indifferent between
tendering and not tendering her share, or otherwise she would not use mixing strategy. If she tenders,
she ends up with $X$, and if she does not, her unsold stake is worth more than $0$ if among remaining
$2n$ shareholders at least $n+1$ shareholders tender their shares. That happens with probability
$\sum_{k=n+1}^{2n}\C[2n]{k}\sigma^k(1-\sigma)^{2n-k}$. In that case the firm is taken over. A shareholder who did not tender her
share remains a minority shareholder who ``free-rides'' on investor's improvement in firm value from $0$ to $1$. So the pair $(X,\sigma)$ can be a suspect for a symmetric mixed strategy equilibrium only if
\begin{equation}\label{eq:X}
X=\sum_{k=n+1}^{2n}\C[2n]{k}\sigma^k(1-\sigma)^{2n-k}.
\end{equation}
Here by $$\binom{n}{k}=\frac{n!}{k!(n-k)!}$$we denote the binomial coefficients.

The investor's expected profit $\Pi$ is calculated using three variables: the number of tendered shares, probability that exactly that many shares are tendered, and the share value: $$\Pi=(0-X)\sum_{k=1}^{n}k\C[2n{+}1]{k}\sigma^k(1-\sigma)^{2n{+}1{-}k}+(1{-}X)\sum_{k=n+1}^{2n+1}k\C[2n{+}1]{k} \sigma^k(1-\sigma)^{2n{+}1{-}k}.$$

After a suitable rearrangement and substituting for $X$ the sum (\ref{eq:X}) we obtain:
$$
\begin{aligned}
\Pi&=\sum_{k=n+1}^{2n+1}\C[2n+1]{k}k\,\sigma^k(1-\sigma)^{2n+1-k}- X\sum_{k=1}^{2n+1}\C[2n+1]{k}\,k\,\sigma^k(1-\sigma)^{2n+1-k}=\\
&=(2n{+}1)\kern-5pt\sum_{k{=}n{+}1}^{2n{+}1}\C[2n]{k{-}1}\sigma^k(1{-}\sigma)^{2n{+}1{-}k}{-}X(2n{+}1)\kern-3pt\sum_{k=1}^{2n{+}1}\C[2n]{k{-}1}
\sigma^{k}(1{-}\sigma)^{2n{+}1{-}k}=\\
&=(2n{+}1)\sum_{k=n}^{2n}\C[2n]{k} \sigma^{k+1}(1-\sigma)^{2n{-}k}-X (2n{+}1)\,\sigma\sum_{k=0}^{2n}\C[2n]{k}\sigma^{k}(1-\sigma)^{2n{-}k}=\\
&=(2n{+}1)\sum_{k=n}^{2n}\C[2n]{k} \sigma^{k+1}(1-\sigma)^{2n-k}-(2n{+}1)\,\sigma X(\sigma+(1-\sigma))^{2n}=
\end{aligned}
$$
$$\begin{aligned}
&=(2n{+}1)\sum_{k=n}^{2n}\C[2n]{k} \sigma^{k+1}(1-\sigma)^{2n-k}-(2n{+}1)\sigma X=\\
&=(2n{+}1)\sum_{k=n}^{2n}\C[2n]{k}\sigma^{k+1}(1-\sigma)^{2n-k}-(2n{+}1)\sigma \sum_{k=n+1}^{2n}\C[2n]{k}\sigma^k(1-\sigma)^{2n-k}=\\
&=(2n{+}1)\sum_{k=n}^{2n}\C[2n]{k}\sigma^{k+1}(1-\sigma)^{2n-k}-(2n{+}1) \sum_{k=n+1}^{2n}\C[2n]{k}\sigma^{k+1}(1-\sigma)^{2n-k}=\\
&=(2n+1)\C[2n]{n}\sigma^{n+1}(1-\sigma)^{n}.
\end{aligned}
$$
The maximal value
$$\Pi_\zero=(2n+1)\C[2n]{n}\sigma_\zero^{n+1}(1-\sigma_\zero)^{n}=\C[2n]{n}\frac{(n+1)^{n+1}n^n}{(2n+1)^{2n}}$$of the profit of the investor is attained for the probability $$\sigma_\zero=\frac{n+1}{2n+1}$$ that corresponds to the price of a share
$$X_\zero=\sum_{k=n+1}^{2n}\C[2n]{k}\sigma_\zero^{k}
(1-\sigma_\zero)^{2n-k}=\sum_{k=n+1}^{2n}\C[2n]{k}\frac{(n+1)^{k}n^{2n-k}}{(2n+1)^{2n}}.$$
In this situation the probability of takeover the firm by the investor equals
$$
P_\zero=\sum_{k=n+1}^{2n+1}\C[2n+1]{k}\sigma_\zero^k(1-\sigma_\zero)^{2n+1-k}=
\sum_{k=n+1}^{2n+1}\C[2n+1]{k}\frac{(n+1)^k n^{2n+1-k}}{(2n+1)^{2n+1}}.
$$

The no-toehold strategy will be denoted by $\mathcal S_\zero$. We summarize our description of this strategy in the following:

\begin{theorem}\label{t0} If the investor uses the no-toehold strategy $\mathcal S_0$ to take over a firm with $(2n+1)$ shareholders, then he should offer the price
$$X_\zero=\sum_{k=n+1}^{2n}\C[2n]{k}\frac{(n+1)^kn^{2n-k}}{(2n+1)^{2n}}$$
for a share in the tender offer and can expect to take over the firm with probability
$$P_\zero=\sum_{k=n+1}^{2n+1}\C[2n+1]{k}\frac{(n+1)^kn^{2n+1-k}}{(2n+1)^{2n+1}}$$and expect for the profit
$$
\Pi_\zero=\C[2n]{n}\frac{(n+1)^{n+1}n^n}{(2n+1)^{2n}}.
$$
To maximize their expected profit the shareholders should sell their shares to the investor with probability
$$\sigma_\zero=\frac{n+1}{2n+1}.$$
\end{theorem}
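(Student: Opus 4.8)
The plan is to follow the equilibrium-then-optimize route sketched in the text preceding the statement, since Theorem~\ref{t0} merely collects those computations into one place. First I would justify the equilibrium price formula~(\ref{eq:X}): for a symmetric mixed strategy with common acceptance probability $\sigma\in(0,1)$ to be an equilibrium, every shareholder must be exactly indifferent between tendering and retaining her share. Tendering yields $X$ with certainty, whereas retaining yields the improved value $1$ precisely when at least $n+1$ of the \emph{other} $2n$ shareholders tender --- an event of probability $\sum_{k=n+1}^{2n}\binom{2n}{k}\sigma^k(1-\sigma)^{2n-k}$ --- and yields $0$ otherwise. Equating the two expected payoffs gives exactly~(\ref{eq:X}).

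Second, I would compute the investor's expected profit $\Pi$ as a function of $\sigma$ alone, substituting~(\ref{eq:X}) for $X$. This is the combinatorial heart of the argument, and it rests on three ingredients: the absorption identity $k\binom{2n+1}{k}=(2n+1)\binom{2n}{k-1}$, which lowers the top index from $2n+1$ to $2n$ after a shift of summation variable; the binomial theorem $(\sigma+(1-\sigma))^{2n}=1$, which collapses the entire sum arising from the $-X$ term; and finally the observation that, once $X$ is reinserted, the two remaining sums --- one ranging over $k=n,\dots,2n$ and the other over $k=n+1,\dots,2n$ --- differ by the single summand $k=n$. After this cancellation one is left with the remarkably clean expression
$$\Pi(\sigma)=(2n+1)\binom{2n}{n}\,\sigma^{n+1}(1-\sigma)^{n}.$$

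Third, I would maximize $\Pi(\sigma)$ over $\sigma\in(0,1)$. As the prefactor is a positive constant it suffices to maximize $f(\sigma)=\sigma^{n+1}(1-\sigma)^n$; differentiating gives $f'(\sigma)=\sigma^n(1-\sigma)^{n-1}\bigl[(n+1)(1-\sigma)-n\sigma\bigr]$, so the unique interior critical point solves $(n+1)(1-\sigma)=n\sigma$, namely $\sigma_0=\frac{n+1}{2n+1}$. Since $f$ vanishes at both endpoints and is positive inside $(0,1)$, this point is the global maximum. Substituting $\sigma_0$ and $1-\sigma_0=\frac{n}{2n+1}$ into $\Pi(\sigma)$ produces the stated closed form for $\Pi_\zero$, and the same substitution into~(\ref{eq:X}) and into the takeover probability $\sum_{k=n+1}^{2n+1}\binom{2n+1}{k}\sigma^k(1-\sigma)^{2n+1-k}$ produces the stated formulas for $X_\zero$ and $P_\zero$.

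The only genuinely delicate step is the profit simplification in the second paragraph: one must keep the index shifts and summation ranges aligned carefully so that the two sums telescope down to a single central-binomial term. Everything else is either the economic indifference argument of the first paragraph or elementary one-variable calculus.
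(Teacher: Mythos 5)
Your proposal is correct and follows essentially the same route as the paper: the indifference condition gives equation~(\ref{eq:X}), the absorption identity $k\binom{2n+1}{k}=(2n+1)\binom{2n}{k-1}$ together with the binomial theorem collapses the profit to $(2n+1)\binom{2n}{n}\sigma^{n+1}(1-\sigma)^n$, and maximizing over $\sigma$ yields $\sigma_\zero=\frac{n+1}{2n+1}$ and the stated closed forms. The only cosmetic difference is that you make the first-order condition for the maximizer explicit, whereas the paper simply asserts the optimal $\sigma_\zero$.
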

\smallskip

$\mathbf 1$. Now we consider a more complex strategy $\mathcal S_1$ called the {\em toehold strategy}. Following this strategy the investor $\mathbf B$ first tries to purchase one-share 'toehold' from a
shareholder $\mathbf A$ who is aware that $\mathbf B$ is about to launch a tender offer to acquire majority of shares suggesting the price $X_0$ for a share. We
assume that $\mathbf A$ is the only shareholder from whom $\mathbf B$ is able to purchase a toehold, and $\mathbf A$ agrees to sell her share to the investor $\mathbf B$ for the price $X_0$.

After buying the toehold from the shareholder $\mathbf A$, the investor announces a post-toehold tender offer to the remaining $2n$ shareholders, offering a price $X_\one$ for a share.
If $\sigma_\one$ is the probability that a shareholder will tender her share for that price, then the equilibrium will occur if
$$X_\one=\sum_{k=n}^{2n-1}\C[2n-1]{k}\sigma_\one^k(1-\sigma_\one)^{2n-1-k},$$
which is equal to the probability that among $2n-1$ shareholders at least $n$ will sell their shares.

The probability of takeover the firm in the post toehold tender is equal to
$$P_{\one}=\sum_{k=n}^{2n}\C[2n]{k}\sigma_1^k(1-\sigma_1)^{2n-k}$$and the expected profit $\Pi_\one$ of the investor for the toehold strategy is equal to
{
$$
\begin{aligned}
&\Pi_1=({-}X_0{+}1\cdot P_\one){+}(0{-}X_{\one})\sum_{k=1}^{n{-}1}k\C[2n]{k}\sigma_\one^k(1{-}\sigma_\one)^{2n{-}k}{+} (1{-}X_{\one})\sum_{k=n}^{2n}k\C[2n]{k} \sigma_\one^k(1{-}\sigma_\one)^{2n{-}k}=\\
&=({-}X_0{+}P_\one){-}X_{\one}2n\sigma_1\sum_{k=1}^{2n}\C[2n{-}1]{k{-}1}\sigma_\one^{k{-}1}(1{-}\sigma_\one)^{2n{-}k}{+} 2n\sigma_1\sum_{k=n}^{2n}\C[2n{-}1]{k{-}1} \sigma_\one^{k{-}1}(1{-}\sigma_\one)^{2n{-}k}=\\
&=({-}X_0{+}P_\one){-}2n\sigma_1X_1\kern-4pt\sum_{k=0}^{2n{-}1}\kern-2pt\C[2n{-}1]{k}\sigma_\one^{k}(1{-}\sigma_\one)^{2n{-}k{-}1}{+} 2n\sigma_1\kern-6pt\sum_{k=n{-}1}^{2n{-}1}\kern-2pt\C[2n{-}1]{k} \sigma_\one^{k}(1{-}\sigma_\one)^{2n{-}k{-}1}=\\
&=(-X_0+P_\one)-2n\sigma_1X_{\one}(\sigma_1+(1-\sigma_1))^{2n-1}+ 2n\sigma_\one\sum_{k=n-1}^{2n-1}\C[2n-1]{k} \sigma_\one^{k}(1-\sigma_\one)^{2n-k-1}=\\
&=({-}X_0{+}P_\one){-}2n\sigma_1\kern-3pt\sum_{k=n}^{2n{-}1}\C[2n{-}1]{k}\sigma_\one^k(1{-}\sigma_\one)^{2n{-}1{-}k} {+} 2n\sigma_\one\kern-4pt\sum_{k{=}n{-}1}^{2n{-}1}\C[2n{-}1]{k} \sigma_\one^{k}(1{-}\sigma_\one)^{2n{-}k{-}1}=\\
&=(-X_0+P_\one)+2n\C[2n-1]{n-1}\sigma_\one^{n}(1-\sigma_\one)^{n}=(-X_0+P_\one)+n\C[2n]{n}\sigma_\one^{n}(1-\sigma_\one)^{n}=\\
&=-X_0+\sum_{k=n}^{2n}\C[2n]{k}\sigma_1^k(1-\sigma_1)^{2n-k}+n\C[2n]{n}\sigma_\one^n(1-\sigma_\one)^n.
\end{aligned}
$$
}
To find the maximal value of the expected profit $\Pi_1$, consider the derivative
{
$$
\begin{aligned}
\frac{d\Pi_1}{d\sigma_1}&=\frac{d}{d\sigma_1}\sigma_1^{2n}+\frac{d}{d\sigma_1}\sum_{k=n}^{2n-1}\C[2n]{k}\sigma_1^k(1-\sigma_1)^{2n-k}+\frac{d}{d\sigma_1}n\C[2n]{n}\sigma_\one^n(1-\sigma_\one)^n=\\
&=2n\sigma_1^{2n-1}+\sum_{k=n}^{2n-1}\C[2n]{k}\big(k\sigma_1^{k-1}(1{-}\sigma_1)^{2n-k}-(2n-k)\sigma_1^k(1{-}\sigma_1)^{2n-k-1}\big)+n^2\C[2n]{n}\sigma_1^{n{-}1}(1{-}\sigma_1)^{n{-}1}(1{-}2\sigma_1)=\\
&=\sum_{k=n}^{2n}2n\C[2n-1]{k-1}\sigma_1^{k-1}(1-\sigma_1)^{2n-k}-\sum_{k=n}^{2n-1}2n\C[2n-1]{k}\sigma_1^k(1-\sigma_1)^{2n-1-k}+n^2\C[2n]{n}\sigma_1^{n-1}(1-\sigma_\one)^{n-1}(1-2\sigma_1)=\\
&=\sum_{k=n-1}^{2n-1}2n\C[2n-1]{k}\sigma_1^{k}(1{-}\sigma_1)^{2n-k-1}-\sum_{k=n}^{2n-1}2n\C[2n-1]{k}\sigma_1^k(1{-}\sigma_1)^{2n-1-k}+n^2\C[2n]{n}\sigma_1^{n-1}(1{-}\sigma_1)^{n-1}(1-2\sigma_1)=\\
&=2n\C[2n-1]{n-1}\sigma_1^{n-1}(1-\sigma_1)^{n}+n^2\C[2n]{n}\sigma_1^{n-1}(1-\sigma_\one)^{n-1}(1-2\sigma_1)=\\
&=n\C[2n]{n}\sigma_1^{n-1}(1-\sigma_1)^{n-1}(1-\sigma_1)+n^2\C[2n]{n}\sigma_1^{n-1}(1-\sigma_\one)^{n-1}(1-2\sigma_1)=\\
&=n\C[2n]{n}\sigma_1^{n-1}(1-\sigma_1)^{n-1}\big(1-\sigma_1+n(1-2\sigma_1)\big)\\
\end{aligned}
$$}
and observe that it is equal to zero at $\sigma_1=\frac{n+1}{2n+1}=\sigma_0$.

So, for $\sigma_1=\sigma_0=\frac{n+1}{2n+1}$ the expected profit $\Pi_1$ attains its maximal value
{
$$
\begin{aligned}
\Pi_1&=-X_0+\sum_{k=n}^{2n}\C[2n]{k}\sigma_1^k(1-\sigma_1)^{2n-k}+n\C[2n]{n}\sigma_\one^n(1-\sigma_\one)^n=\\
&=-\sum_{k=n+1}^{2n}\C[2n]{k}\sigma_0^k(1-\sigma_0)^{2n-k}+\sum_{k=n}^{2n}\C[2n]{k}\sigma_1^k(1-\sigma_1)^{2n-k}+n\C[2n]{n}\sigma_\one^n(1-\sigma_\one)^n=\\
&=\C[2n]{n}\sigma_1^n(1-\sigma_1)^{n}+n\C[2n]{n}\sigma_\one^n(1-\sigma_\one)^n=(n+1)\C[2n]{n}\frac{(n+1)^nn^n}{(2n+1)^2n}=\Pi_0.
\end{aligned}
$$
}

The above discussion can be summed up in:

\begin{theorem}\label{t1} If the investor follows the toehold strategy $\mathcal S_1$, then he buys a toehold from the shareholder $\mathbf A$ offering the price $X_0$ for her share and then in the post-toehold offer he offers the price
$$X_\one=\sum_{k=n}^{2n-1}\C[2n-1]{k}\frac{(n+1)^kn^{2n-1-k}}{(2n+1)^{2n-1}}$$
for a share, in which case the
shareholders will sell their shares with probability
$$\sigma_\one=\frac{n+1}{2n+1}=\sigma_0,$$
the investor can takeover the firm with probability
$$P_\one=\sum_{k=n}^{2n}\C[2n]{k}\frac{(n+1)^kn^{2n-k}}{(2n+1)^{2n}}$$
and can expect for the profit
$$\Pi_\one=\C[2n]{n}\frac{(n+1)^{n+1}n^n}{(2n+1)^{2n}}=\Pi_0.$$
\end{theorem}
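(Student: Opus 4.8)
\section*{Proof proposal}

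The plan is to recover each of the four parameters $X_1,\sigma_1,P_1,\Pi_1$ from first principles, in the same spirit as the no-toehold analysis behind Theorem~\ref{t0}, and only then to compare the resulting profit with $\Pi_0$. First I would pin down the equilibrium price. After $\mathbf B$ has bought the toehold there remain $2n$ shareholders, and since $\mathbf B$ already holds one share, takeover now requires only $n$ of these $2n$. A shareholder who declines the post-toehold offer $X_1$ keeps a share worth $1$ precisely when at least $n$ of the \emph{other} $2n-1$ shareholders tender, so indifference between tendering and free-riding forces
$$X_1=\sum_{k=n}^{2n-1}\binom{2n-1}{k}\sigma_1^k(1-\sigma_1)^{2n-1-k},$$
while takeover succeeds exactly when at least $n$ of all $2n$ remaining shareholders tender, giving
$$P_1=\sum_{k=n}^{2n}\binom{2n}{k}\sigma_1^k(1-\sigma_1)^{2n-k}.$$

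Next I would assemble the expected profit from three contributions: the toehold itself, costing $X_0$ and worth $1$ on takeover, which contributes $-X_0+P_1$; the shares purchased when takeover fails (at most $n-1$ tenders), paid for at $X_1$ but worth $0$; and the shares purchased when takeover succeeds, paid for at $X_1$ but worth $1$. Using the absorption identity $k\binom{2n}{k}=2n\binom{2n-1}{k-1}$ to reindex both inner sums, the binomial theorem to collapse the full sum that multiplies $X_1$ into the constant $2n\sigma_1$, and the equilibrium value of $X_1$ above, the whole expression should telescope to
$$\Pi_1=-X_0+P_1+n\binom{2n}{n}\sigma_1^n(1-\sigma_1)^n.$$

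To locate the optimum I would treat $X_0$ as a constant and differentiate the two remaining terms. The clean way to do this is to invoke the standard fact that the derivative of a binomial tail telescopes to a single term, $\frac{d}{d\sigma_1}P_1=2n\binom{2n-1}{n-1}\sigma_1^{n-1}(1-\sigma_1)^{n}=n\binom{2n}{n}\sigma_1^{n-1}(1-\sigma_1)^{n}$, whereas the product rule gives $\frac{d}{d\sigma_1}\big[n\binom{2n}{n}\sigma_1^n(1-\sigma_1)^n\big]=n^2\binom{2n}{n}\sigma_1^{n-1}(1-\sigma_1)^{n-1}(1-2\sigma_1)$. Adding these and factoring out $n\binom{2n}{n}\sigma_1^{n-1}(1-\sigma_1)^{n-1}$ leaves the linear factor $(1-\sigma_1)+n(1-2\sigma_1)$, whose only interior root is $\sigma_1=\frac{n+1}{2n+1}=\sigma_0$, and a sign check confirms this is a maximum.

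Finally I would evaluate at $\sigma_1=\sigma_0$. The decisive observation is that $P_1$ and $X_0$ are the \emph{same} binomial sum except that $P_1$ carries the extra index $k=n$, so their difference is exactly that single term, $P_1-X_0=\binom{2n}{n}\sigma_0^n(1-\sigma_0)^n$. Hence $\Pi_1=(n+1)\binom{2n}{n}\sigma_0^n(1-\sigma_0)^n$, and substituting $\sigma_0=\frac{n+1}{2n+1}$, $1-\sigma_0=\frac{n}{2n+1}$ yields $\binom{2n}{n}\frac{(n+1)^{n+1}n^n}{(2n+1)^{2n}}=\Pi_0$. I expect the one genuinely delicate step to be the profit simplification together with the differentiation: one must be careful that the reindexed sums align so that almost everything cancels and only the central-binomial term survives, and recognizing the binomial-tail derivative identity is what keeps this from degenerating into an opaque computation.
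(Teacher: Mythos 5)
Your proposal is correct and follows essentially the same route as the paper: the same indifference condition defining $X_1$, the same three-part profit decomposition telescoping to $-X_0+P_1+n\binom{2n}{n}\sigma_1^n(1-\sigma_1)^n$, the same derivative factorization locating $\sigma_1=\frac{n+1}{2n+1}=\sigma_0$, and the same final observation that $P_1-X_0$ collapses to the single $k=n$ term, giving $\Pi_1=(n+1)\binom{2n}{n}\sigma_0^n(1-\sigma_0)^n=\Pi_0$. The only cosmetic difference is that you invoke the binomial-tail derivative identity as a known fact where the paper rederives it by term-by-term reindexing.
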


As we see from Theorems~\ref{t0}, \ref{t1}, the no-toehold and toehold strategies yield the same profit $\Pi_\zero=\Pi_\one$ and the same probability $\sigma_0=\sigma_1=\frac{n+1}{2n+1}$ of selling  their shares by the shareholders in the tender offers. On the other hand, the prices for a share and the probabilities $P_\zero$ and $P_\one$ of takeover the firm are different for these two strategies. The precise estimate of the differences $P_\one-P_\zero$ and $X_\one-X_\zero$ will be given in Corollary~\ref{diff}. Now, let us consider a simple example.

\subsection{\bf A firm with 3 shareholders}
In case of 3-shareholders (which corresponds to $n=1$) the values of all parameters from Theorems~\ref{t0} and \ref{t1} can be easily calculated:
\begin{itemize}
\item $\sigma_\zero=2/3$ is the probability that shareholders will sell their shares to the investor for the price:
\item $X_{\zero}=4/9$ suggested by the investor in the no-toehold strategy,
\item $P_{\zero}=20/{27}$ is the probability of taking over the firm in no-toehold strategy;
\item $\Pi_{\zero}={8}/{9}$ is the expected profit of the investor in the no-toehold strategy;
\smallskip

\item $\sigma_{\one}=2/3$ is the probability that a shareholder will tender her share to the investor for the price:
\item $X_{\one}=2/3$ suggested by the investor in the post-toehold tender offer,
\item $P_{\one}=8/9$ is the probability of taking over the firm in the toehold strategy,
\item $\Pi_{\one}=8/9$ is the expected profit of the investor in the toehold strategy.
\end{itemize}
Looking at these data, we see that both strategies yield the same profit but the toehold strategy is much better than the no-toehold strategy in the sense of probability of takeover the firm.

It turns out that the same situation happens for all $n\in\IN$, see Corollary~\ref{diff} below. In this corollary we shall prove that the difference $P_{\one}-P_{\zero}$ of probabilities for the toehold and no-toehold strategies is strictly positive and has order $P_{\one}-P_{\zero}\approx \frac{1}{2\sqrt{\pi n}}$.

\section{Explicit analytic expressions for the parameters of the models}

For deriving the lower and upper bounds presented in Theorem~\ref{bounds} we shall transform the binomial sums appearing in the expressions of the parameters of our models and obtain precise analytic formulas for these parameters, after which we shall evaluate them using some bounds on central binomial coefficients and simple bounds giving by Taylor series. Our principal tool in finding explicit analytic expressions for the parameters of the model is use of incomplete beta functions.

By definition, the {\em beta function} is the function
$$B(a,b)=\int_0^1t^{a-1}(1-t)^{b-1}dt$$depending on two real positive parameters $a,b$.
For fixed $a,b$ the function
$$B_x(a,b)=\int_0^xt^{a-1}(1-t)^{b-1}dt$$on the variable $x\in[0,1]$ is called the {\em incomplete beta function}. A remarkable property of the incomplete beta function is that for positive integer numbers $a,b$ its value is proportional to a tail of the binomial series:
\begin{equation}\label{eq:beta}
\sum_{k=a}^{a{+}b{-}1}\C[a{+}b{-}1]{k}x^k(1-x)^{a{+}b{-}1{-}k}=
a\C[a{+}b{-}1]{a}B_x(a,b)=a\C[a{+}b{-}1]{a}\int_0^{x}t^{a-1}(1-t)^{b{-}1}dt
\end{equation}
This equality plays a fundamental role in our subsequent arguments and will be referred to as the {\em beta-equality}. For the proof of the beta-equality and other information on (incomplete) beta functions, we refer the reader to the survey paper of Dutka \cite{D81}.

Beta functions will be used in the proof of the following theorem that gives explicit analytic formulas for the parameters describing the no-toehold and toehold strategies.

\begin{theorem}\label{formulas} The parameters of the models can be calculated by the following formulas:
\begin{enumerate}
\item The price $X_{\zero}=\sum_{k=n+1}^{2n}\C[2n]{k}\frac{(n+1)^kn^{2n-k}}{(2n+1)^{2n}}$ suggested by the investor in the no-toehold strategy can be found by the formula
$$X_{\zero}=\frac12-\frac1{2^{2n+1}}{\C[2n]{n}}\Big(1-\frac1{(2n+1)^2}\Big)^n+
\frac{n}{2^{2n}}{\C[2n]{n}}\int_0^{\frac1{2n+1}}(1-t^2)^{n-1}dt.$$
\item The probability $P_{\zero}=\sum_{k=n+1}^{2n+1}\C[2n+1]{k}\frac{(n+1)^kn^{2n+1-k}}{(2n+1)^{2n+1}}$ of taking over the firm in the no-toehold strategy can be found as
$$P_{\zero}=\frac12+\frac{(2n+1)}{2^{2n+1}}\C[2n]{n}\int_0^{\frac1{2n+1}}(1-t^2)^{n}dt.$$
\item The expected profits $\Pi_{\zero}=\Pi_\one=\C[2n]{n}\frac{n^n(n+1)^{n+1}}{(2n+1)^{2n}}$ of the investor can be found by
$$\Pi_{\zero}=\Pi_\one=\frac{(n+1)}{2^{2n}}\C[2n]{n}\Big(1-\frac1{(2n+1)^2}\Big)^n.$$
\item The probability
$P_{\one}=\sum_{k=n}^{2n}\C[2n]{k}\frac{(n+1)^kn^{2n-k}}{(2n+1)^{2n}}$ of takeover the firm in the post-toehold strategy is equal to
    $$P_1=\frac12+\frac1{2^{2n+1}}{\C[2n]{n}}\Big(1-\frac1{(2n+1)^2}\Big)^n+
\frac{n}{2^{2n}}\C[2n]{n}\int_0^{\frac1{2n+1}}(1-t^2)^{n-1}dt.$$
\item The price $X_1=\sum_{k=n}^{2n-1}\C[2n-1]{k}\frac{(n+1)^kn^{2n-1-k}}{(2n+1)^{2n-1}}$ for a share offered by the investor in the post-toehold tender offer can be calculated as
$$X_\one=\frac12+\frac{n}{2^{2n}}\C[2n]{n}\int_{0}^{\frac1{(2n+1)}}(1-t^2)^{n-1}dt.$$
 \end{enumerate}
\end{theorem}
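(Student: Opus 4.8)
The plan is to treat all five formulas by one unified computation, exploiting the fact that each parameter is a tail of a binomial sum evaluated at the single point $\sigma_0=\frac{n+1}{2n+1}$ (recall $1-\sigma_0=\frac{n}{2n+1}$). First I would apply the beta-equality \eqref{eq:beta} to convert each sum into an incomplete beta integral. For $X_0=\sum_{k=n+1}^{2n}\binom{2n}{k}\sigma_0^k(1-\sigma_0)^{2n-k}$ the choice $a=n+1,\ b=n$ gives $X_0=(n+1)\binom{2n}{n+1}\int_0^{\sigma_0}t^{n}(1-t)^{n-1}\,dt$; likewise $P_0$ (with $a=b=n+1$), $P_1$ (with $a=n,\ b=n+1$) and $X_1$ (with $a=b=n$) become $(n+1)\binom{2n+1}{n+1}\int_0^{\sigma_0}t^n(1-t)^n\,dt$, $\;n\binom{2n}{n}\int_0^{\sigma_0}t^{n-1}(1-t)^n\,dt$ and $\;n\binom{2n-1}{n}\int_0^{\sigma_0}t^{n-1}(1-t)^{n-1}\,dt$. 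The profit $\Pi_0=\Pi_1$ needs no integral: it is pure algebra, rewriting $\frac{n(n+1)}{(2n+1)^2}=\frac14\bigl(1-\frac1{(2n+1)^2}\bigr)$ and factoring out one copy of $(n+1)$ from $\frac{n^n(n+1)^{n+1}}{(2n+1)^{2n}}$.

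Next I would normalize every prefactor to $\binom{2n}{n}$ using the elementary identities $(n+1)\binom{2n}{n+1}=n\binom{2n}{n}$, $\;(n+1)\binom{2n+1}{n+1}=(2n+1)\binom{2n}{n}$, and $\binom{2n-1}{n}=\tfrac12\binom{2n}{n}$. Then comes the key move: the substitution $t=\frac{1+u}{2}$, under which $t(1-t)=\frac{1-u^2}{4}$, the lower limit $t=0$ becomes $u=-1$, and the upper limit $\sigma_0=\frac{n+1}{2n+1}$ becomes exactly $u=\frac{1}{2n+1}$. This symmetrizes the integrands: for the symmetric cases $X_1$ and $P_0$ the integrand becomes a constant times $(1-u^2)^{n-1}$ and $(1-u^2)^{n}$ respectively, while for the skew cases $X_0$ and $P_1$ the leftover linear factor produces $(1-u^2)^{n-1}(1+u)$ and its mirror image $(1-u^2)^{n-1}(1-u)$.

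For the skew integrands I would split $(1-u^2)^{n-1}(1\pm u)$ into its even part $(1-u^2)^{n-1}$ and its odd part $\pm u(1-u^2)^{n-1}$. The odd part integrates in closed form via $\int u(1-u^2)^{n-1}\,du=-\frac1{2n}(1-u^2)^{n}$, and evaluating from $-1$ to $\frac1{2n+1}$ yields precisely the term $\mp\frac1{2^{2n+1}}\binom{2n}{n}\bigl(1-\frac1{(2n+1)^2}\bigr)^n$ — the sign being negative for $X_0$ and positive for $P_1$, which is exactly what forces $P_1>X_0$. The remaining even part, together with the genuinely symmetric cases, I would handle by splitting $\int_{-1}^{1/(2n+1)}=\int_{-1}^{0}+\int_0^{1/(2n+1)}$; the second piece $\int_0^{1/(2n+1)}(1-u^2)^{\cdot}\,du$ reproduces verbatim the integral term appearing in the statement, and by evenness $\int_{-1}^{0}=\tfrac12\int_{-1}^{1}$.

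The main obstacle, and the only non-mechanical step, is to show that this bulk symmetric piece collapses to exactly $\tfrac12$. It reduces to the symmetric beta evaluation $\int_{-1}^{1}(1-u^2)^{m}\,du=B\bigl(\tfrac12,m+1\bigr)$, after which one must check the two instances $\frac{n}{2^{2n}}\binom{2n}{n}\int_{-1}^{1}(1-u^2)^{n-1}\,du=1$ and $\frac{2n+1}{2^{2n+1}}\binom{2n}{n}\int_{-1}^{1}(1-u^2)^{n}\,du=1$. Both follow from the duplication-formula relation $\binom{2n}{n}=\frac{4^n}{\sqrt{\pi}}\,\frac{\Gamma(n+1/2)}{n!}$ combined with $\Gamma\bigl(n+\tfrac32\bigr)=\bigl(n+\tfrac12\bigr)\Gamma\bigl(n+\tfrac12\bigr)$. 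Granting these two identities, assembling the constant $\tfrac12$, the matching tail integral, and (for $X_0,P_1$) the odd-part term produces each of the five claimed formulas; the derivations differ only in which of these three ingredients are present.
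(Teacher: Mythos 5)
Your proposal is correct and follows essentially the same route as the paper: the beta-equality, re-centering the incomplete beta integral at $t=\frac12$, and an even/odd split whose odd part integrates in closed form to the $\pm\frac1{2^{2n+1}}\binom{2n}{n}\bigl(1-\frac1{(2n+1)^2}\bigr)^n$ term, with $\Pi_0=\Pi_1$ handled by pure algebra. The only cosmetic differences are that the paper obtains the constant $\frac12$ by evaluating the binomial sum at $x=\frac12$ (via $\sum_k\binom{m}{k}2^{-m}=1$ and symmetry) rather than through the Gamma duplication formula, and derives $P_1$ as $X_0$ plus the single term $\binom{2n}{n}\sigma_0^n(1-\sigma_0)^n$ instead of running the integral computation afresh.
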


\begin{proof} 1. To deduce the formula for the price $X_{\zero}$, we use the beta-equality (\ref{eq:beta}) with parameters $a=n+1$ and $b=n$. In this case we get the equality
\begin{equation}\label{eq1}
\sum_{k=n+1}^{2n}\C[2n]{k}x^k(1-x)^{2n-k}=(n+1)\C[2n]{n+1}\int_0^xt^n(1-t)^{n-1}dt=\\=n\C[2n]{n}\int_0^xt^n(1-t)^{n-1}dt.
\end{equation}
For $x=\frac12$, this equality turns into:
\begin{equation}
n\C[2n]{n}\int_0^\frac12t^n(1-t)^{n-1}=\sum_{k=n+1}^{2n}\C[2n]{k}\frac{1}{2^{2n}}
=\frac12\Big(-\C[2n]{n}\frac1{2^{2n}}+\sum_{k=0}^{2n}\C[2n]{k}\frac1{2^{2n}}\Big)
=\frac12\Big(1-\C[2n]{n}\frac1{2^{2n}}\Big)
\end{equation}
because $$\sum_{k=0}^{2n}\C[2n]{k}\frac1{2^{2n}}=\Big(\frac12+\frac12\Big)^n=1.$$

Then (\ref{eq1}) can be written as:
{
$$
\begin{aligned}
&\sum_{k=n+1}^{2n}\C[2n]{k}x^k(1-x)^{2n-k}=n\C[2n]{n}\Big(\int_0^{1/2}t^n(1-t)^{n-1}dt+
\int_{1/2}^{x}t^n(1-t)^{n-1}dt\Big)=\\
&=\frac12-\frac1{2^{2n+1}}{\C[2n]{n}}+n\C[2n]{n}\int_0^{x-\frac12}(\tfrac12+u)^n(\tfrac12-u)^{n-1}du=\\
&=\frac12-\frac1{2^{2n+1}}{\C[2n]{n}}+n\C[2n]{n}\int_0^{x-\frac12}(\tfrac12+u)(\tfrac14-u^2)^{n-1}du=\\
&=\frac12-\frac1{2^{2n+1}}{\C[2n]{n}}+n\C[2n]{n}\Big(\frac12\int_0^{x-\frac12}(\tfrac14-u^2)^{n-1}du-
\frac12\int_0^{x-\frac12}(\tfrac14-u^2)^{n-1}d(\tfrac14-u^2)\Big)=\\
&=\frac12-\frac1{2^{2n{+}1}}\C[2n]{n}+\frac{n}{2^{2n{-}1}}\C[2n]{n}\int_0^{x-\frac12}(1-(2u)^2)^{n{-}1}du-
\frac12{\C[2n]{n}}\big((x-x^2)^n-\tfrac1{4^n})=\\
&=\frac12+\frac{n}{2^{2n}}\C[2n]{n}\int_0^{2x-1}(1-t^2)^{n-1}dt-
\frac12{\C[2n]{n}}x^n(1-x)^n.
\end{aligned}
$$
}
For $x=\sigma_{\zero}=\frac12+\frac1{2(2n+1)}$ the latter formula yields the required formula for the price $X_{\zero}$:
$$
X_{\zero}=\sum_{k=n+1}^{2n}\C[2n]{k}\sigma_{\zero}^k(1-\sigma_{\zero})^{2n-k}=\frac12-
\frac1{2^{2n+1}}{\C[2n]{n}}\Big(1-\frac1{(2n+1)^2}\Big)^n+
\frac{n}{2^{2n}}\C[2n]{n}\int_0^{\frac1{2n+1}}(1-t^2)^{n-1}dt.
$$

2. By analogy we deduce the formula for the probability $$P_{\zero}=\sum_{k=n+1}^{2n+1}\C[2n+1]{k}\frac{(n+1)^kn^{2n+1-k}}{(2n+1)^{2n+1}}$$ of takeover the firm in the no-toehold strategy. Writing down the beta-equality (\ref{eq:beta}) for the parameters $a=n+1$ and $b=n+1$, we get
\begin{equation}\label{eq:P_0}
\sum_{k=n+1}^{2n+1}\C[2n+1]{k}x^k(1-x)^{2n+1-k}=(n+1)\C[2n+1]{n+1}\int_0^xt^n(1-t)^{n}dt.
\end{equation}
For $x=\frac12$, this equality turns into:
$$(n+1)\C[2n+1]{n+1}\int_0^\frac12t^n(1-t)^{n}=\sum_{k=n+1}^{2n+1}\C[2n+1]{k}\frac{1}{2^{2n+1}}=\frac12.$$
After suitable rearrangements, for $x=\sigma_{\zero}=\frac{n+1}{2n+1}$ the equality (\ref{eq:P_0}) transforms into the desired equality:
{$$\label{eq2:P0x}
\begin{aligned}
P_{\zero}=&\sum_{k=n+1}^{2n+1}\C[2n+1]{k}x^k(1-x)^{2n+1-k}=(n+1)\C[2n+1]{n+1}\Big(\int_0^{1/2}t^n(1-t)^{n}dt+
\int_{1/2}^{x}t^n(1-t)^{n}dt\Big)=\\
&=\frac12+(n+1)\C[2n+1]{n+1}\int_0^{x-1/2}(\tfrac12+u)^n(\tfrac12-u)^{n}du=
\frac12+\frac{(2n+1)}{2^{2n+1}}\C[2n]{n}\int_0^{2x-1}(1-t^2)^{n}dt=\\
&=\frac12+\frac{(2n+1)}{2^{2n+1}}\C[2n]{n}\int_0^{\frac1{2n+1}}(1-t^2)^{n}dt.
\end{aligned}
$$
}

3. The formula for the profits $\Pi_{\zero}=\Pi_\one=\C[2n]{n}\frac{n^n(n+1)^{n+1}}{(2n+1)^{2n}}$ follows from the observation that $$\frac{n^2+n}{(2n+1)^2}=\frac14\Big(1-\frac1{(2n+1)^2}\Big).$$

4. Taking into account that $\sigma_1=\sigma_0=\frac{n+1}{2n+1}$ and looking at the formula for $X_0$ proved in Theorem~\ref{formulas}(1), we see that
$$
\begin{aligned}
P_1&=\sum_{k=n}^{2n}\C[2n]{k}\sigma_1^k(1-\sigma_1)^{2n-k}=\C[2n]{n}\sigma_1^n(1-\sigma_1)^n+X_0=\\
&=\frac1{2^{2n}}{\C[2n]{n}}\Big(1-\frac1{(2n+1)^2}\Big)^n+\frac12-\frac1{2^{2n+1}}{\C[2n]{n}}\Big(1-\frac1{(2n+1)^2}\Big)^n+\frac{n}{2^{2n}}\C[2n]{n}\int_0^{\frac1{2n+1}}(1-t^2)^{n-1}dt=\\
&=\frac12+\frac1{2^{2n+1}}{\C[2n]{n}}\Big(1-\frac1{(2n+1)^2}\Big)^n+
\frac{n}{2^{2n}}\C[2n]{n}\int_0^{\frac1{2n+1}}(1-t^2)^{n-1}dt.
\end{aligned}
$$

5. The beta-equation (\ref{eq:beta}) written for $a=n$ and $b=n$ yields:
$$
\begin{aligned}
X_\one&=\sum_{k=n}^{2n-1}\C[2n-1]{k}\sigma_1^k(1-\sigma_1)^{2n-1-k}=n\C[2n-1]{n}\int_0^{\sigma_1}t^{n-1}(1-t)^{n-1}dt=\\
&=n\C[2n-1]{n}\int_0^{\frac12}t^{n-1}(1-t)^{n-1}dt+n\C[2n-1]{n}
\int_{\frac12}^{\sigma_1-\frac12}t^{n-1}(1-t)^{n-1}dt=\\
&=\sum_{k=n}^{2n-1}\C[2n-1]{k}\frac1{2^{2n-1}}+n\C[2n-1]{n}
\int_{0}^{\frac1{2(2n+1)}}(\tfrac12+u)^{n-1}(\tfrac12-u)^{n-1}du=\\
&=\frac12\sum_{k=0}^{2n-1}\C[2n-1]{k}\frac1{2^{2n-1}}+\frac{n}2\C[2n]{n}
\int_{0}^{\frac1{2(2n+1)}}(\tfrac14-u^2)^{n-1}du=\\
&=\frac12+\frac{n}{2^{2n-1}}\C[2n]{n}\int_{0}^{\frac1{2(2n+1)}}(1-4u^2)^{n-1}du=\frac12+\frac{n}{2^{2n}}\C[2n]{n}\int_{0}^{\frac1{(2n+1)}}(1-t^2)^{n-1}dt.\\
\end{aligned}
$$
\end{proof}

Using the formulas from Theorem~\ref{formulas}, one can derive the following lower and upper bounds for the parameters of our models, see \cite{BBKT} for details.

\begin{theorem}\label{bounds} The parameters of the models lie in the following intervals:
\begin{enumerate}
\itemindent23pt
\vskip5pt

\item[$X_{\zero}$:] $\;\displaystyle \frac12+\frac1{\sqrt{\pi n}}\Big(-\frac{1}{6n}-\frac1{64n^2}\Big)<X_{\zero}<\frac12+\frac1{\sqrt{\pi n}}\Big(-\frac{1}{6n}+\frac5{24n^2}\Big)$,
\vskip5pt

\item[$X_{\one}$:] $\;\displaystyle \frac12+\frac1{\sqrt{\pi n}}\Big(\frac12-\frac{5}{16n}+\frac1{48n^2}\Big)<X_{\one}<\frac12+\frac1{\sqrt{\pi n}}\Big(\frac12-\frac{5}{16n}+\frac1{12n^2}\Big)$,
\vskip5pt

\item[$P_{\zero}$:] $\;\displaystyle \frac12+\frac{1}{\sqrt{\pi n}}\Big(\frac12-\frac5{48n}+\frac1{16n^2}
\Big)<P_{\zero}<\frac12+\frac{1}{\sqrt{\pi n}}\Big(\frac12-\frac5{48n}+\frac6{16n^2}\Big)$,
\vskip5pt

\item[$P_{\one}$:]    $\;\displaystyle \frac12+\frac{1}{\sqrt{\pi n}}\Big(1-\frac{13}{24n}+\frac3{16n^2}\Big)<P_{\one}<\frac12+\frac{1}{\sqrt{\pi n}}\Big(1-\frac{13}{24n}+\frac4{16n^2}\Big),$

\vskip5pt
\item[$\Pi_{\zero},\Pi_\one$:] $\;\displaystyle
\frac{1}{\sqrt{\pi n}}\Big(n+\frac5{8}-\frac1{4n}\Big)<
\Pi_{\zero}=\Pi_\one<\frac{1}{\sqrt{\pi n}}\Big(n+\frac5{8}-\frac1{24n}+\frac1{3n^2}\Big)$.

\end{enumerate}
\end{theorem}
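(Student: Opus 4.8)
The plan is to exploit the fact that every parameter in Theorem~\ref{bounds} is a tail probability of a binomial law $\mathrm{Bin}(N,\sigma)$ with the \emph{same} success probability $\sigma=\frac{n+1}{2n+1}$, and that all of them collapse onto two ``anchor'' quantities. Writing $S_N$ for a $\mathrm{Bin}(N,\sigma)$ variable and setting
$$C:=\binom{2n}{n}\sigma^n(1-\sigma)^n=\binom{2n}{n}4^{-n}\Big(1-\tfrac1{(2n+1)^2}\Big)^n,$$
I would first record the exact identities $P_\one=X_\zero+C$, $P_\zero=X_\zero+\sigma C$, $X_\one=X_\zero+\tfrac12 C$ and $\Pi_\zero=\Pi_\one=(n+1)C$. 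Each follows by conditioning on a single Bernoulli trial: $S_{2n+1}=S_{2n}+\xi$ gives $P_\zero=\sigma P_\one+(1-\sigma)X_\zero$, while $S_{2n}=S_{2n-1}+\xi$ gives $P_\one=X_\one+\sigma\,P(S_{2n-1}=n-1)$, and the identity $\binom{2n-1}{n-1}=\tfrac12\binom{2n}{n}$ turns the last term into $\tfrac12C$; the profit identity is already computed in Section~\ref{s:models}. This reduces the whole theorem to two-sided estimates for just $C$ and one balanced tail.

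For the balanced tails I would use the incomplete--beta representation of the binomial tail, $P(S_N\ge m)=I_\sigma(m,N-m+1)$. The point is that $P_\zero=I_\sigma(n+1,n+1)$ and $X_\one=I_\sigma(n,n)$ have integrands symmetric about $t=\tfrac12$, so after splitting the integral at $\tfrac12$ and substituting $t=\tfrac12+\tfrac{v}{2}$ they take the clean form
$$P_\zero-\tfrac12=\tfrac{2n+1}{2}\,\binom{2n}{n}4^{-n}\!\int_0^{b}(1-v^2)^n\,dv,\qquad b=\tfrac1{2n+1},$$
and similarly $X_\one-\tfrac12=n\binom{2n}{n}4^{-n}\int_0^{b}(1-v^2)^{n-1}\,dv$. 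Thus everything is expressed through the common prefactor $\binom{2n}{n}4^{-n}$ and the short integrals $\int_0^b(1-v^2)^m\,dv$ over an interval of length $b\approx\frac1{2n}$.

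The two remaining ingredients are then purely analytic. For the prefactor I would invoke refined (Robbins--Stirling) two-sided bounds on the central binomial coefficient, namely $\binom{2n}{n}=\frac{4^n}{\sqrt{\pi n}}e^{\rho_n}$ with $\rho_n$ trapped between $\frac1{24n+1}-\frac1{6n}$ and $\frac1{24n}-\frac2{12n+1}$, and expand the exponential with explicit remainder to pin $\binom{2n}{n}4^{-n}=\frac1{\sqrt{\pi n}}\big(1-\frac1{8n}+O(\tfrac1{n^2})\big)$ to order $\frac1{n^2}$ from both sides. For $\int_0^b(1-v^2)^m\,dv$ I would expand $(1-v^2)^m$ by the binomial theorem and integrate term by term, bounding the resulting alternating series by consecutive partial sums; the factor $\big(1-\frac1{(2n+1)^2}\big)^n$ inside $C$ is handled the same way via $\log$ and Taylor's theorem with remainder. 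Substituting these into the anchor formulas and then into the linear identities of the first step yields the stated intervals for $X_\zero,X_\one,P_\zero,P_\one,\Pi_\zero=\Pi_\one$.

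The main obstacle is precision bookkeeping, not concept. Because $X_\zero,X_\one,P_\zero,P_\one$ all equal $\tfrac12$ to leading order and the $\frac1{\sqrt{\pi n}}$ terms of $P_\zero$ and $\sigma C$ cancel almost completely in $X_\zero=P_\zero-\sigma C$, I must carry every expansion to relative order $\frac1{n^2}$ with \emph{explicit, non-asymptotic} two-sided remainders valid for all $n\in\IN$; soft $O(\cdot)$ asymptotics are not enough to land inside the narrow windows quoted. Producing the exact rational coefficients ($\tfrac58$, $-\tfrac{5}{48}$, $-\tfrac{13}{24}$, and the like) therefore hinges on sharp bracketing of the central binomial coefficient and of the two short integrals, and the smallest values of $n$, where these bounds are loosest, may have to be verified by direct substitution of the closed forms from Theorems~\ref{t0} and~\ref{t1}.
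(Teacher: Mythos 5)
Your proposal is correct and rests on exactly the machinery of the paper's proof: the incomplete-beta (beta-equality) representation of the binomial tails, symmetrization of the beta integral about $t=\tfrac12$ so that everything reduces to the common prefactor $\binom{2n}{n}4^{-n}$, the power $\bigl(1-\frac1{(2n+1)^2}\bigr)^n$ and the short integrals $\int_0^{1/(2n+1)}(1-t^2)^m\,dt$ for $m\in\{n-1,n\}$, refined Robbins--Stirling two-sided bounds on the central binomial coefficient, elementary Taylor-type bounds with explicit remainders for the integrands, and a direct verification for the smallest $n$. The only genuine difference is organizational: you first establish the exact linear identities $P_\one=X_\zero+C$, $P_\zero=X_\zero+\sigma C$, $X_\one=X_\zero+\tfrac12C$ and $\Pi_\zero=\Pi_\one=(n+1)C$ by conditioning on a single Bernoulli trial, reducing five estimates to two; the paper instead derives five separate closed forms in Theorem~\ref{formulas}, but these encode the very same identities (its item (4) is literally $P_\one=C+X_\zero$, and items (1) and (5) differ by $\tfrac12C$), so the analytic content is unchanged --- your route to the identities is cleaner, the paper's is more explicit about which beta parameters $(a,b)$ are invoked where. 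One concrete caution, which your closing caveat gestures at but which deserves emphasis: for $X_\zero$ you must carry out the subtraction $X_\zero=X_\one-\tfrac12C$, i.e.\ combine $n\int_0^{1/(2n+1)}(1-t^2)^{n-1}dt$ with $-\tfrac12\bigl(1-\frac1{(2n+1)^2}\bigr)^n$ \emph{inside} the common prefactor before applying any inequality (as the paper does), because the constant terms $\tfrac12$ cancel exactly and the first-order terms cancel from $-\frac{7}{24n}$ and $+\frac{1}{8n}$ down to $-\frac1{6n}$; if you instead subtract independently derived two-sided intervals for $P_\zero$ and $\sigma C$ of the widths quoted in the theorem statement, the interval widths add and you will not land inside the stated window $\frac1{\sqrt{\pi n}}\bigl(-\frac1{6n}-\frac1{64n^2},-\frac1{6n}+\frac5{24n^2}\bigr)$. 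With that ordering of operations the plan goes through exactly as the paper's proof does.
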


Looking at the bounds for the probabilities $P_\zero,P_\one$ and the prices $X_\zero,X_\one$ we can notice that $P_{\zero}<P_{\one}$ and $X_\zero<X_\one$. An estimation of the differences $P_{\one}-P_{\zero}$ and $X_\one-X_\zero$ is given in the following corollary of Theorem~\ref{bounds}.

\begin{corollary}\label{diff}
$$\frac{1}{\sqrt{\pi n}}\cdot\Big(\frac12-\frac{31}{48n}-\frac{3}{16n^2}\Big)<P_{\one}-P_{\zero}<\frac{1}{\sqrt{\pi n}}\cdot\Big(\frac12-\frac{31}{48n}+\frac{3}{16n^2}\Big)$$and
$$\frac{1}{\sqrt{\pi n}}\cdot\Big(\frac12-\frac{7}{48n}-\frac{3}{16n^2}\Big)<X_{\one}-X_{\zero}<\frac{1}{\sqrt{\pi n}}\cdot\Big(\frac12-\frac{7}{48n}+\frac{1}{24n^2}\Big).$$
\end{corollary}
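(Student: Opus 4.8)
The plan is to deduce Corollary~\ref{diff} from Theorem~\ref{bounds} by first writing the two differences $P_1-P_0$ and $X_1-X_0$ in closed form and then feeding those expressions into the two-sided bounds already proved there. Throughout I abbreviate $p=\frac{n+1}{2n+1}$ and $q=1-p=\frac{n}{2n+1}$; note that $\sigma_0=\sigma_1=p$, so all four sums in Theorems~\ref{t0} and \ref{t1} are binomial tail probabilities evaluated at this common $p$.

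First I would compute the differences exactly. Applying Pascal's rule $\binom{2n+1}{k}=\binom{2n}{k}+\binom{2n}{k-1}$ to the sum defining $P_0$ and re-indexing the second piece, the two resulting sums recombine with $P_1=\sum_{k=n}^{2n}\binom{2n}{k}p^kq^{2n-k}$ and everything cancels except one boundary term, leaving $P_1-P_0=\binom{2n}{n}p^nq^{n+1}$. The identical manoeuvre with $\binom{2n}{k}=\binom{2n-1}{k}+\binom{2n-1}{k-1}$ applied to $X_0$ yields $X_1-X_0=\binom{2n-1}{n}p^nq^n=\tfrac12\binom{2n}{n}p^nq^n$, using $\binom{2n-1}{n}=\tfrac12\binom{2n}{n}$. (These two identities are exactly the contiguity relations for the incomplete beta function in disguise, which is consistent with the apparatus announced for Theorem~\ref{bounds}.) Both right-hand sides are manifestly positive, which already settles the strict inequalities $P_0<P_1$ and $X_0<X_1$ observed just before the Corollary.

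The quantitative bounds then come almost for free, because these closed forms are simple rational multiples of the profit. Since $\Pi_0=(n+1)\binom{2n}{n}(pq)^n$ by Theorem~\ref{t0}, I obtain the exact identities $X_1-X_0=\frac{\Pi_0}{2(n+1)}$ and $P_1-P_0=\frac{n}{(n+1)(2n+1)}\,\Pi_0$. Now I would insert the two-sided bounds for $\Pi_0=\Pi_1$ supplied by Theorem~\ref{bounds} and multiply through by the explicit factors $\frac{1}{2(n+1)}$ and $\frac{n}{(n+1)(2n+1)}$. Each factor expands as $\frac{1}{2(n+1)}=\frac{1}{2n}-\frac{1}{2n^2}+\cdots$ and $\frac{n}{(n+1)(2n+1)}=\frac{1}{2n}-\frac{3}{4n^2}+\cdots$, so the leading $n$ inside the $\Pi_0$-bound is cancelled by the $\frac1n$, leaving the constant $\tfrac12$ at the head of each bracket and, on collecting the next orders, the $1/n$ and $1/n^2$ coefficients recorded in the Corollary.

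The only genuine work is this final step, and it is purely algebraic: I must multiply the bracketed polynomial bounds for $\Pi_0$ by the rational multipliers and truncate to order $1/n^2$ while keeping every inequality rigorous. The one subtlety is the bookkeeping of remainders: since $\frac{1}{2(n+1)}$ (respectively $\frac{n}{(n+1)(2n+1)}$) is not a finite polynomial in $1/n$, I would bracket it between two explicit rational functions of $n$ on the range $n\ge1$ (truncations of its expansion with a remainder of definite sign), and then pair these one-sided estimates with the correspondingly signed endpoints of the $\Pi_0$-interval, so that the lower bound of a difference uses the lower endpoint of $\Pi_0$ together with the lower rational envelope, and dually for the upper bound. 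No new analytic input beyond Theorem~\ref{bounds} is required — the central binomial estimates have already done their work inside that theorem. As an independent sanity check I would confirm the exact identities against the three-shareholder example, where $P_1-P_0=\tfrac{4}{27}$ and $X_1-X_0=\tfrac29$ at $n=1$.
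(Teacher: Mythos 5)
Your two exact identities are correct, and they give a genuinely different (and cleaner) route than the paper's. The paper offers no separate argument for the Corollary: it is obtained by termwise subtraction of the one-sided bounds in Theorem~\ref{bounds} (lower bound of $P_\one$ minus upper bound of $P_\zero$, and so on). You instead telescope the binomial tails with Pascal's rule to get $P_\one-P_\zero=\binom{2n}{n}\sigma_0^n(1-\sigma_0)^{n+1}$ and $X_\one-X_\zero=\tfrac12\binom{2n}{n}\sigma_0^n(1-\sigma_0)^n$, hence $X_\one-X_\zero=\frac{\Pi_0}{2(n+1)}$ and $P_\one-P_\zero=\frac{n}{(n+1)(2n+1)}\,\Pi_0$; this reduces everything to the single two-sided bound on $\Pi_0$ and, as a bonus, proves strict positivity of both differences with no asymptotics at all. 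That is a real gain in economy.

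The gap is in the step you call ``purely algebraic'' and defer: carrying it out does not land on the coefficients printed in the Corollary. From $\frac{n}{(n+1)(2n+1)}=\frac1{2n}-\frac3{4n^2}+O(n^{-3})$ and $\Pi_0=\frac1{\sqrt{\pi n}}\big(n+\frac58+O(n^{-1})\big)$, your identity forces the $1/n$-coefficient of $\sqrt{\pi n}\,(P_\one-P_\zero)$ to equal $\frac12\cdot\frac58-\frac34=-\frac7{16}=-\frac{21}{48}$, not $-\frac{31}{48}$. Your method therefore yields bounds of the form $\frac1{\sqrt{\pi n}}\big(\frac12-\frac{21}{48n}\pm\cdots\big)$: the Corollary's lower bound (with the smaller $-\frac{31}{48n}$) follows a fortiori, but its upper bound does not follow and in fact cannot, since it lies below the true value. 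The $n=1$ check you propose already refutes it: $P_\one-P_\zero=\frac4{27}\approx0.148$, while $\frac1{\sqrt{\pi}}\big(\frac12-\frac{31}{48}+\frac3{16}\big)=\frac1{24\sqrt{\pi}}\approx0.024$. (The paper's own termwise subtraction also produces $-\frac{21}{48n}$; the printed $\frac{31}{48}$ looks like $\frac{13}{24}+\frac5{48}$ taken with the wrong sign, and a similar smaller discrepancy affects the $n^{-2}$ term of the upper bound for $X_\one-X_\zero$.) So the honest conclusion of your otherwise correct computation is that the stated upper bound needs its coefficient corrected to $\frac{21}{48}=\frac7{16}$; asserting that the algebra ``leaves the coefficients recorded in the Corollary'' is the step that fails.
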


\begin{remark} The difference $X_\one-X_\zero\approx\frac1{2\sqrt{\pi n}}$ can be interpreted as the price for the information that the investor possesses a toehold.
\end{remark}

The lower and upper bounds of Theorem~\ref{bounds} can be derived using the following lower and upper bounds for functions appearing in the formulas in Theorem~\ref{formulas}.

\begin{lemma}\label{elbound} For every $n\in\IN$ and a real number $x>0$ the following inequalities hold:
\begin{enumerate}
\item $1-x<1-x+\frac12x^2-\frac16x^3<e^{-x}<1-x+\frac12x^2$;
\item $1-x<\frac1{1+x}<1-x+x^2$;
\item $1-nx<(1-x)^n<1-nx+\frac{n(n-1)}2x^2$.
\end{enumerate}
\end{lemma}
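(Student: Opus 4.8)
The three inequalities all have the shape ``a polynomial brackets a fixed function,'' so my uniform strategy is: for each one-sided bound, move everything to one side to form an auxiliary function $\phi(x)$ with $\phi(0)=0$, and then pin down the sign of $\phi$ on the relevant range by examining $\phi'$, differentiating once more whenever the sign of $\phi'$ is not yet manifest, until I reach a base case of obvious sign. For item (2) it is faster to clear denominators and read off the sign directly.

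For item (1) I would prove the three comparisons in the order that lets each one reuse its predecessor. First the rightmost inequality: set $f(x)=1-x+\tfrac12x^2-e^{-x}$, so that $f(0)=0$, $f'(x)=-1+x+e^{-x}$ with $f'(0)=0$, and $f''(x)=1-e^{-x}>0$ for $x>0$; hence $f'$ increases from $0$, so $f'>0$, so $f>0$, which is $e^{-x}<1-x+\tfrac12x^2$. Next the middle inequality: set $g(x)=e^{-x}-\big(1-x+\tfrac12x^2-\tfrac16x^3\big)$, note $g(0)=0$ and $g'(x)=-e^{-x}+1-x+\tfrac12x^2=f(x)>0$, whence $g>0$ and $1-x+\tfrac12x^2-\tfrac16x^3<e^{-x}$. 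This reuse of $f$ as $g'$ is the clean point of the argument. The leftmost inequality then reduces to the elementary $\tfrac12x^2-\tfrac16x^3=\tfrac{x^2}{6}(3-x)>0$.

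For item (2) I would take common denominators: using $(1-x)(1+x)=1-x^2$ and $(1-x+x^2)(1+x)=1+x^3$, one gets $\frac{1}{1+x}-(1-x)=\frac{x^2}{1+x}>0$ and $\big(1-x+x^2\big)-\frac{1}{1+x}=\frac{x^3}{1+x}>0$ for $x>0$. For item (3) the lower bound comes from $\phi(x)=(1-x)^n-(1-nx)$ with $\phi(0)=0$ and $\phi'(x)=n\big(1-(1-x)^{n-1}\big)>0$ on $0<x<1$ (this is just Bernoulli's inequality). For the upper bound I set $\psi(x)=1-nx+\binom{n}{2}x^2-(1-x)^n$, check $\psi(0)=\psi'(0)=0$, and compute $\psi''(x)=n(n-1)\big(1-(1-x)^{n-2}\big)\ge 0$ on $0<x<1$, whence $\psi'\ge 0$ and $\psi\ge 0$.

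The derivative computations are routine; the one genuine subtlety — what I expect to be the main obstacle to a \emph{correct} statement — is the range of validity. Strictness fails for all $x>0$: the leftmost bound in (1) needs $x<3$ (note the factor $3-x$), while both bounds in (3) require $0<x<1$ and moreover degenerate to equalities for small $n$ (item (3) is an identity at $n=1$, and its upper bound is an identity at $n=2$), becoming strict only for $n$ large enough. Since every use of the lemma in deriving Theorem~\ref{bounds} involves $x=O(1/n)$ with $n\to\infty$, these exceptions are harmless, but a scrupulous write-up should either restrict to $0<x<1$ or record them explicitly.
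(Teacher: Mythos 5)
Your derivative-based argument is precisely the proof the paper has in mind --- the paper itself only remarks that the inequalities ``can be easily proved by calculating the corresponding derivatives'' and omits all details, which you have supplied correctly (including the nice reuse of $f$ as $g'$ in item (1)). Your caveat about the range of validity is also well taken: as literally stated the lemma fails for some $x>0$ (the leftmost bound in (1) needs $x<3$, and item (3) degenerates to equality for $n\le 2$ and can fail for $x>2$), though, as you observe, every application in the paper has $x=O(1/n)$, so the defect in the statement is harmless in context.
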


\begin{lemma}\label{somebounds} The following lower and upper bounds hold for every $n\in\IN$:
\begin{enumerate}
\item $\quad\displaystyle \frac1{2n}-\frac1{4n^2}+\frac1{12n^3}<\frac1{2n+1}<\frac1{2n}-\frac1{4n^2}+\frac1{8n^3}$;
\item $\quad\displaystyle 1-\frac1{4n}+\frac1{8n^2}<{\Big(1-\frac1{(2n+1)^2}\Big)^n}<1-\frac1{4n}+\frac9{32n^2}$;
\item $\quad\displaystyle \frac1{2n}-\frac{7}{24n^2}+\frac{11}{48n^3}<\int_0^{\frac1{2n+1}}(1-t^2)^{n}dt<
    \frac1{2n}-\frac{7}{24n^2}+\frac{18}{48n^3}$;
\item $\quad\displaystyle \frac1{2n}-\frac{7}{24n^2}+\frac{5}{48n^3}< \int_0^{\frac1{2n+1}}(1-t^2)^{n-1}dt<\frac1{2n}-\frac{7}{24n^2}+\frac{12}{48n^3}$.
\end{enumerate}
\end{lemma}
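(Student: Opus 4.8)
The plan is to reduce all four two-sided estimates, via the elementary inequalities collected in Lemma~\ref{elbound}, to explicit polynomial bounds in the variable $1/n$; throughout I abbreviate $a=\frac1{2n+1}$. Items (1) and (2) are purely algebraic, while (3) and (4) require a single integration step applied to a sandwiched integrand. In every case the strategy is the same: replace a rational or power expression by a short Taylor-type polynomial carrying an explicit remainder of controlled sign, and then verify the resulting rational inequality by clearing denominators.

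For item (1) I would write $\frac1{2n+1}=\frac1{2n}\cdot\frac1{1+\frac1{2n}}$ and apply Lemma~\ref{elbound}(2) with $x=\frac1{2n}$, refining it by one more order where needed via the sign-controlled identity $\frac1{1+x}=1-x+x^2-\frac{x^3}{1+x}$; equivalently, after multiplying through by $12n^3(2n+1)>0$ both inequalities become the sign question for a cubic in $n$. For item (2) I would first linearize the power by Lemma~\ref{elbound}(3) with $x=\frac1{(2n+1)^2}$, obtaining $1-\frac{n}{(2n+1)^2}<\bigl(1-\frac1{(2n+1)^2}\bigr)^n<1-\frac{n}{(2n+1)^2}+\frac{n(n-1)}{2(2n+1)^4}$, and then estimate $\frac{n}{(2n+1)^2}=\frac1{4n}\bigl(1-\frac1{2n+1}\bigr)^2$ and the quadratic correction using the bounds on $a$ from item (1); the only care is to check that the correction term is small enough to keep the sum inside the stated window.

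The heart of the lemma is items (3) and (4), which I would treat in parallel with exponent $m\in\{n,n-1\}$. The idea is to bound the integrand pointwise on $[0,a]$ by Lemma~\ref{elbound}(3) with $x=t^2$, giving $1-mt^2<(1-t^2)^m<1-mt^2+\frac{m(m-1)}2t^4$, and to integrate termwise, which yields $a-\frac{m a^3}3<\int_0^a(1-t^2)^m\,dt<a-\frac{m a^3}3+\frac{m(m-1)}{10}a^5$. It then remains to substitute sign-controlled estimates for $a$, $a^3$, $a^5$: the bound on $a$ comes from item (1), and bounds on the higher powers follow from the same geometric-series device. For the lower estimate of the integral one combines the lower bound on $a$ with an upper bound on the subtracted term $\frac{m a^3}3$; for the upper estimate one uses the upper bound on $a$, a lower bound on $\frac{m a^3}3$, and an upper bound on the positive $a^5$-term.

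The main obstacle is the precise bookkeeping of the $n^{-3}$ coefficients in (3) and (4). Because the announced second-order coefficient $-\frac7{24n^2}$ is not visible in any single term but emerges only after combining the $-\frac14$ contribution hidden in the expansion of $a$ with the $-\frac1{24}$ contribution from $\frac{m a^3}3$, the power bounds for $a$ and $a^3$ must be accurate to orders $n^{-3}$ and $n^{-4}$ respectively, and the direction of each auxiliary inequality must be chosen consistently so that the accumulated remainders fall strictly inside the brackets. This step is routine but genuinely delicate: it is exactly here that the third-order constants $\frac{11}{48},\frac{18}{48},\frac{5}{48},\frac{12}{48}$ must be pinned down, and getting them to land in the stated intervals is what dictates how sharply the remainders in Lemma~\ref{elbound} have to be used.
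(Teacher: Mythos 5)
Your overall route---linearize with Lemma~\ref{elbound}, integrate the sandwiched integrand termwise, and then expand the powers of $a=\frac1{2n+1}$ in $1/n$ with sign-controlled remainders---is exactly the paper's, and it does deliver items (1), (2), (4) and the upper bound in (3). (A small caveat in (1): at $n=1$ the left inequality is an equality, $\frac12-\frac14+\frac1{12}=\frac13$, so the strict ``$<$'' fails there; your ``sign of a cubic'' check would surface this.)

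The genuine problem is the lower bound in item (3), i.e.\ precisely the ``delicate bookkeeping'' you defer: it cannot be completed, because that inequality is false. For $n=2$ one has $\int_0^{1/5}(1-t^2)^2\,dt=\frac15-\frac2{375}+\frac1{15625}=0.19473\ldots$, while $\frac14-\frac7{96}+\frac{11}{384}=\frac{79}{384}=0.20573\ldots$, so the claimed lower bound exceeds the integral (the same happens at $n=1$ and for all large $n$). The reason is visible in your own expansion: $a-\frac{n}{3}a^3=\frac1{2n}-\frac7{24n^2}+\frac3{16n^3}+O(n^{-4})$ with negative fourth-order corrections, and $\frac3{16}=\frac9{48}<\frac{11}{48}$; since the quartic correction $\frac{n(n-1)}{10}a^5$ contributes only about $\frac1{320n^3}$, the integral itself equals $\frac1{2n}-\frac7{24n^2}+\frac{9}{48n^3}+o(n^{-3})$, strictly below the stated lower bound. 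One can also see the inconsistency internally: $\int_0^a(1-t^2)^{n-1}dt-\int_0^a(1-t^2)^{n}dt=\int_0^a t^2(1-t^2)^{n-1}dt\sim\frac{2}{48n^3}$, so (3)'s lower bound together with (4)'s upper bound would force $\frac{11}{48}+\frac{2}{48}\le\frac{12}{48}$ at third order, which is absurd. What your method actually yields---and what the paper's own chain yields once its arithmetic slip is corrected ($\frac{24n^2-14n+9}{48n^3}=\frac1{2n}-\frac7{24n^2}+\frac{3}{16n^3}$, not $+\frac5{16n^3}$)---is the lower bound $\frac1{2n}-\frac7{24n^2}+\frac5{48n^3}$. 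So the gap is not in your strategy but in the target: the constant $\frac{11}{48}$ in item (3) must be weakened (to $\frac5{48}$, say), after which you should re-examine which of the downstream lower bounds on $P_{\zero}$ in Theorem~\ref{bounds} survive the change.
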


\begin{lemma}\label{binom} The lower and upper bounds
$$\frac{4^n}{\sqrt{\pi n}}\Big(1-\frac1{8n}+\frac1{64n^2})<\binom{2n}{n}<\frac{4^n}{\sqrt{\pi n}}\Big(1-\frac1{8n}+\frac1{48n^2}\Big)$$hold for every $n\in\IN$.
\end{lemma}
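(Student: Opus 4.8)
The plan is to feed the refined Stirling formula of Lemma~\ref{stirling} into $\binom{2n}{n}=\dfrac{(2n)!}{(n!)^2}$ and to control the resulting exponential error to second order in $1/n$. Writing $(2n)!=\sqrt{4\pi n}\,(2n/e)^{2n}e^{r_{2n}}$ and $(n!)^2=2\pi n\,(n/e)^{2n}e^{2r_n}$, the power factors collapse via $(2n/e)^{2n}/(n/e)^{2n}=2^{2n}=4^n$ and the algebraic prefactors via $\sqrt{4\pi n}/(2\pi n)=1/\sqrt{\pi n}$, so that
$$\binom{2n}{n}=\frac{4^n}{\sqrt{\pi n}}\,e^{R},\qquad R:=r_{2n}-2r_n.$$
The whole problem thus reduces to sandwiching the single factor $e^{R}$ between the two prescribed quadratic-in-$1/n$ expressions.

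First I would bound the exponent $R$. Lemma~\ref{stirling} gives $r_n\in\bigl(\frac1{12n}-\frac1{192n^3},\,\frac1{12n}\bigr)$, and the same lemma applied to $2n$ gives $r_{2n}\in\bigl(\frac1{24n}-\frac1{1536n^3},\,\frac1{24n}\bigr)$. Pairing the upper estimate of $r_{2n}$ with the lower estimate of $2r_n$ for an upper bound on $R$, and conversely for a lower bound, yields $-\frac1{8n}-\frac{c_1}{n^3}<R<-\frac1{8n}+\frac{c_2}{n^3}$ with explicit small constants $c_1,c_2$. The leading coefficient is exactly $\frac1{24}-\frac16=-\frac18$, and---because $r_n$ contains only odd powers of $1/n$---there is no $n^{-2}$ term in $R$; consequently the entire $n^{-2}$ content of the final inequalities must be manufactured by the quadratic term of the exponential.

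The last step is to pass from $R$ to $e^{R}=e^{-|R|}$ through Lemma~\ref{elbound}(1), which provides $1-x+\frac12x^2-\frac16x^3<e^{-x}<1-x+\frac12x^2$ for $x=|R|>0$. Both polynomials are decreasing on the relevant range, so I would substitute the \emph{upper} estimate of $|R|$ into the cubic polynomial to get the lower bound for $\binom{2n}{n}$ and the \emph{lower} estimate of $|R|$ into the quadratic polynomial to get the upper bound. Expanding, collecting powers of $n$, and absorbing the sign-definite $n^{-3}$ and higher remainders (from the Stirling tail and from the cubic term) then closes each inequality; wherever $r_{2n}$ or $r_n$ must be re-expressed purely in powers of $1/n$, the expansion $\frac1{2n+1}=\frac1{2n}-\frac1{4n^2}+O(n^{-3})$ from Lemma~\ref{somebounds}(1) is available.

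The main obstacle is precisely this last accounting of the $n^{-2}$ term, and the two sides are very unequal in difficulty. The quadratic contribution $\frac12|R|^2$ has size $\frac1{128n^2}$, so the upper bound with coefficient $\frac1{48}$ is comfortably loose, whereas a lower bound at this order is the binding constraint; I would therefore expect the lower inequality to be the delicate one and would test $n=1$ first. In fact $\binom{2}{1}=2$ while $\frac{4}{\sqrt{\pi}}\bigl(1-\frac18+\frac1{64}\bigr)\approx 2.010$, so the stated lower coefficient $\frac1{64}$ already fails at $n=1$; the computation above suggests the correct lower constant is near $\frac1{128}$, and I would replace $\frac1{64}$ accordingly before writing out the verification.
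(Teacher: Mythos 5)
Your reduction to $\binom{2n}{n}=\frac{4^n}{\sqrt{\pi n}}\,e^{R}$ with $R=r_{2n}-2r_n$, the bounds $-\frac1{8n}-O(n^{-3})<R<-\frac1{8n}+O(n^{-3})$ extracted from Lemma~\ref{stirling}, and the passage to $e^{R}$ via Lemma~\ref{elbound}(1) reproduce the paper's own proof step by step. The substantive content of your proposal is your closing observation, and it is correct: the lower bound of the lemma is false as stated. Since $e^{R}=1+R+\frac12R^2+\cdots$ and $R=-\frac1{8n}+O(n^{-3})$, the coefficient of $n^{-2}$ in $\sqrt{\pi n}\,4^{-n}\binom{2n}{n}$ is $\frac1{128}$, strictly smaller than the claimed $\frac1{64}$; accordingly the lower inequality fails not only at $n=1$ (where $\frac{4}{\sqrt\pi}\bigl(1-\frac18+\frac1{64}\bigr)\approx 2.010>2$) but for every $n$ (e.g.\ at $n=10$ it demands $184756=\binom{20}{10}>184769$).

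The paper's derivation in fact arrives at
$$e^{r_{2n}-2r_n}>1-\frac1{2^3n}+\frac1{2^7n^2}-\frac1{2^{10}n^2}=1-\frac1{8n}+\frac{7}{1024\,n^2}$$
and then asserts that this exceeds $1-\frac1{8n}+\frac1{2^6n^2}$; since $\frac{7}{1024}<\frac{16}{1024}=\frac1{64}$, that final step is an arithmetic slip, and what the argument actually establishes is the lower bound with constant $\frac{7}{1024}$ --- consistent with your prediction of a constant near $\frac1{128}=\frac{8}{1024}$. (The upper bound survives: the paper's chain ends at $\frac{7}{384n^2}$, which is below the stated $\frac1{48n^2}=\frac{8}{384n^2}$, even though the last displayed line of that chain writes $\frac1{64n^2}$ where $\frac1{48n^2}$ is evidently meant.) So your plan is the right proof of a corrected statement rather than of the stated one; if you replace $\frac1{64}$ by $\frac{7}{1024}$ (or $\frac1{128}$, after rechecking the $n^{-3}$ remainders), you must also verify that the weakened lower bound still suffices at the points of Section~\ref{s:pf:bounds} where Lemma~\ref{binom} is invoked.
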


These bounds on the central binomial coefficients can be derived from the following refined version of the Stirling formula for factorials, proved in  \cite{M65} and \cite{R55}.

\begin{lemma}\label{stirling} For every $n\ge 1$
$$n!=\sqrt{2\pi n}\Big(\frac ne\Big)^ne^{r_n}$$where
$$\frac1{12n}-\frac1{2^63n^3}<r_n<\frac1{12n}.$$
\end{lemma}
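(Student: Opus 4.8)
The plan is to pass to logarithms and analyze the sequence whose exponential recovers the claimed identity. Writing $r_n=\ln n!-\bigl(n+\tfrac12\bigr)\ln n+n-\tfrac12\ln(2\pi)$, the asserted formula $n!=\sqrt{2\pi n}\,(n/e)^n e^{r_n}$ holds by definition, so it remains to bound $r_n$. It is cleaner to first drop the additive constant and set $d_n:=\ln n!-\bigl(n+\tfrac12\bigr)\ln n+n$, so that $r_n=d_n-\tfrac12\ln(2\pi)$. A direct computation of the consecutive difference gives
$$d_n-d_{n+1}=\Bigl(n+\tfrac12\Bigr)\ln\frac{n+1}{n}-1,$$
and introducing $t=\frac1{2n+1}$ (so that $\frac{n+1}{n}=\frac{1+t}{1-t}$ and $n+\tfrac12=\frac1{2t}$) together with the expansion $\ln\frac{1+t}{1-t}=2\sum_{j\ge0}\frac{t^{2j+1}}{2j+1}$ turns this into the manifestly positive series
$$d_n-d_{n+1}=\sum_{j\ge1}\frac{t^{2j}}{2j+1}=\frac{t^2}{3}+\frac{t^4}{5}+\cdots.$$
In particular $d_n$ is strictly decreasing.

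For the \emph{upper} bound I would dominate the tail by a geometric series,
$$0<d_n-d_{n+1}<\frac13\sum_{j\ge1}t^{2j}=\frac{t^2}{3(1-t^2)}=\frac1{12n(n+1)}=\frac1{12}\Bigl(\frac1n-\frac1{n+1}\Bigr),$$
which rearranges to the statement that $d_n-\frac1{12n}$ is increasing. Since $d_n$ is decreasing and bounded below (telescoping the geometric bound gives $d_1-d_n<\tfrac1{12}$), it converges to a limit $C$; the value $C=\tfrac12\ln(2\pi)$ is the classical Stirling constant, which can be pinned down from the Wallis product $\frac\pi2=\prod_{k\ge1}\frac{(2k)^2}{(2k-1)(2k+1)}$ combined with the asymptotics $n!\sim e^{C}\sqrt n\,(n/e)^n$ coming from $d_n\to C$. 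Then $r_n=d_n-C\to0$, and because $d_n-\frac1{12n}$ increases to $C$ we immediately obtain $d_n-\frac1{12n}<C$, i.e.\ $r_n<\frac1{12n}$.

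For the sharper \emph{lower} bound I would write $r_n=\sum_{k\ge n}(d_k-d_{k+1})$ and seek a telescoping comparison stronger than the one above. Concretely, it suffices to show that
$$f_n:=d_n-\frac1{12n}+\frac1{192n^3}$$
is decreasing with limit $C$: then $f_n>C$ is exactly the claim $r_n>\frac1{12n}-\frac1{2^6\cdot3\,n^3}$. The inequality $f_n>f_{n+1}$ unwinds, after inserting the explicit series for $d_n-d_{n+1}$, to the elementary estimate
$$\frac{t^2}{3}+\frac{t^4}{5}+\cdots>\frac1{12}\Bigl(\frac1n-\frac1{n+1}\Bigr)-\frac1{192}\Bigl(\frac1{n^3}-\frac1{(n+1)^3}\Bigr),$$
which I would settle by clearing denominators and verifying the resulting polynomial inequality for all $n\ge1$, using where convenient the bounds of Lemma~\ref{elbound}. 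This last step is the main obstacle: expanding both sides in powers of $1/n$ shows that the $1/n^2$ and $1/n^3$ contributions cancel in the difference of the two sides, so the margin of the inequality lives entirely in the $1/n^4$ and higher terms (its leading surviving coefficient is a small positive rational). Making this positivity rigorous uniformly in $n$, and in the process seeing why the constant $\tfrac1{192}=\tfrac1{2^6\cdot3}$ is admissible while a smaller denominator would fail, is where careful bookkeeping of the error terms is required rather than mere leading-order asymptotics.
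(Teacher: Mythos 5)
Your route is genuinely different from the paper's. The paper does not prove the Stirling bounds from scratch: it quotes Robbins's upper bound $r_n<\frac1{12n}$ directly from the literature, and obtains the lower bound by starting from Maria's refinement $r_n>\frac1{12n+\frac3{2(2n+1)}}$ and massaging it with Lemma~\ref{elbound}(2) and the expansion of $\frac1{2n+1}$ from Lemma~\ref{somebounds}(1). You instead reconstruct Robbins's telescoping argument: the identity $d_n-d_{n+1}=\sum_{j\ge1}\frac{t^{2j}}{2j+1}$ with $t=\frac1{2n+1}$ is correct, the geometric-series domination giving $d_n-d_{n+1}<\frac1{12}\bigl(\frac1n-\frac1{n+1}\bigr)$ and hence $r_n<\frac1{12n}$ is exactly Robbins's proof and is complete, and your choice of $f_n=d_n-\frac1{12n}+\frac1{192n^3}$ is the right telescoping quantity for the lower bound (note $192=2^6\cdot3$, matching the statement). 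A self-contained proof is a legitimate improvement over citing two external papers; the cost is that you must actually carry out the one computation the paper outsources.

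And that computation is where your proposal stops. The inequality $f_n>f_{n+1}$, i.e.
$$\sum_{j\ge1}\frac{t^{2j}}{2j+1}>\frac1{12}\Bigl(\frac1n-\frac1{n+1}\Bigr)-\frac1{192}\Bigl(\frac1{n^3}-\frac1{(n+1)^3}\Bigr),$$
is asserted but not verified, and you yourself flag it as the main obstacle. Two remarks on closing it. First, it is not literally a polynomial inequality, since the left side is an infinite series; you must first truncate from below, e.g.\ use $\sum_{j\ge1}\frac{t^{2j}}{2j+1}>\frac{t^2}{3}$ or $\frac{t^2}{3}+\frac{t^4}{5}$, before clearing denominators. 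Second, the inequality is true and your approach does work: writing $u=(2n+1)^2$ one has $\frac1{12n(n+1)}=\frac1{3(u-1)}=\sum_{j\ge1}\frac1{3u^j}$, so the deficit $\frac1{12n(n+1)}-(d_n-d_{n+1})=\sum_{j\ge2}\frac{2j-2}{3(2j+1)}u^{-j}<\frac2{15u^2}+\frac1{3u^2(u-1)}$, while $\frac1{192}\bigl(\frac1{n^3}-\frac1{(n+1)^3}\bigr)=\frac{3u+1}{12(u-1)^3}$; comparing leading terms the margin is $\bigl(\frac1{64}-\frac1{120}\bigr)\frac1{n^4}+O(n^{-5})=\frac7{960n^4}+O(n^{-5})$, and the explicit bounds above suffice to check positivity for all $n\ge1$ (including $n=1$, where $u=9$). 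Until that verification is written out, the lower half of the lemma is not proved; everything else in your argument, including the identification of the constant via Wallis, is standard and sound.
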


\section{Conclusion}

The analysis of our models witnesses that both strategies (toehold and no toehold) of taking over the firm with $2n+1$ shareholders yield the same profit $\Pi_0=\Pi_1\approx \sqrt{\frac{n}{\pi}}$ but the probability $P_1\approx \frac12+\frac1{\sqrt{\pi n}}$ of taking over for the toehold strategy is higher than the corresponding probability $P_0\approx\frac12+\frac1{2\sqrt{\pi n}}$ for the no-toehold strategy. Also the equilibrium price $X_1\approx\frac12+\frac1{2\sqrt{\pi n}}$ for a share offered by the investor in the tender offer announced after buying a toehold is higher that the corresponding price $X_0\approx \frac12-\frac1{6n\sqrt{\pi n}}$ in the tender offer without toehold.
The difference $X_\one-X_\zero\approx\frac1{2\sqrt{\pi n}}$ can be interpreted as the price for the information that the investor possesses a toehold.

\end{document}